\newcommand{\remove}[1]{}
\newcounter{linenumber}
\newcounter{linecounter}
\newcommand{\ignore}[1]{}
\newtheorem{theorem}{Theorem}
\newtheorem{definition}{Definition}[section]
\newtheorem{lemma}[theorem]{Lemma}
\newtheorem{claim}{Claim}
\newenvironment{proof}
    { \noindent\textbf{Proof:}~}{\hfill $\Box$\\[1mm] }
\begin{document}


\title{Asynchrony from Synchrony
}
\author{Yehuda Afek\thanks{Blavatnik School of Computer Sciences, Tel-Aviv University,
Israel 69978. afek@post.tau.ac.il}
    \and
    Eli Gafni\thanks{Computer Science Department,
    University of California, Los Angeles. 3731F Boelter Hall, UCLA, LA. CA. 90095,
    USA. eli@ucla.edu}
       }

\date{\today}

\maketitle

\begin{abstract}

\ignore{
It is common knowledge that designing algorithms for shared-memory is easier than designing
them for message passing.  Hence the celebration over the result of more than two decades
ago that shared-memory can be implemented over message passing.
But is it true that programming with synchronous message-passing is harder than with shared-memory?

To test the validity of this thesis we first have to show that synchronous message-passing can
implement interesting models of shared-memory, e.g. showing a synchronous message-passing model
which allows to solve exactly the tasks which are read-write wait-free (RWWF) solvable.

}
We consider synchronous dynamic networks which like radio networks may have asymmetric
communication links, and are affected by communication rather than processor failures.
In this paper we investigate the minimal message survivability in a per round basis that allows
for the minimal global cooperation, i.e., allows to solve any task that is wait-free read-write solvable.
The paper completely characterizes this survivability requirement.
Message survivability is formalized by considering adversaries that have a limited power to remove messages in a round.
Removal of a message on a link in one direction does not necessarily imply the removal of the message on that link in the other direction.
Surprisingly there exist a single strongest adversary which solves any wait-free read/write task.
Any different adversary that solves any wait-free read/write task is weaker, and any stronger adversary will not
solve any wait-free read/write task. 
ABD \cite{ABD} who considered processor failure, arrived at an adversary that is $n/2$ resilient, consequently can solve tasks, such as
$n/2$-set-consensus, which are not read/write wait-free solvable.
With message adversaries, we arrive at an adversary which has exactly the read-write wait-free power.
Furthermore, this adversary allows for a considerably simpler (simplest that we know of) proof that the protocol complex of
any read/write wait-free task is a subdivided simplex, finally making this proof accessible for students with no
algebraic-topology prerequisites, and alternatively dispensing with the
assumption that the Immediate Snapshot complex is a subdivided simplex.

\remove{We propose to look at (Read-Write Wait-Free) RWWF solvability
through synchronous networks with an adversary that has the power to purge some of the messages sent.
Such networks are not completely a research artifact as radio-networks may have asymmetric communication links
and the question can then be asked about the minimal communication that needs to survive in a round in order for
the network to have enough coordination left such that through repeated rounds with the same communication property
one can solve all but not more than RWWF tasks.

In this paper we identify the minimal condition on message failure in order to maintain RWWF
solvability. For complete networks we show that the necessary condition allows for just the solution
of RWWF solvable task.  Thus it leaves an open problem of the computational power
of given network topology with guarantee of message failure
pattern which provides for RWWF solvability.
We show how we derive a central result of distributed
computing almost for free: The result that for a task to be in RWWF it has to color a subdivided simplex.

We consider synchronous dynamic networks which like radio networks may have asymmetric
communication links, and are affected by communication rather than processor failures.
In this paper we investigate the minimal message survivability in a per round basis that allows
for the minimal global cooperation, i.e., allows to solve any task that is wait-free read-write solvable.
The paper completely characterizes this survivability requirement.
Message survivability is formalized by considering adversaries that have a limited power to remove messages in a round.
Removal of a message on a link in one direction does not necessarily imply the removal of the message on that link in the other direction.
Surprisingly the adversary that is necessary and sufficient to solve any wait-free read/write task is unique.
While ABD \cite{abd} considered processor failure and arrived at an adversary that is $n/2$ resilient, consequently can solve tasks, such as
$n/2$-set-consensus, which are not read/write wait-free solvable.
With message adversaries we arrive at an adversary which has exactly the read-write wait-free power.
Furthermore, this adversary allows for a considerably much simpler (simplest that we know of) proof that the protocol complex of
any read/write wait-free task is a subdivided simplex, finally making this proof accessible for students with no
algebraic-topology prerequisites, and alternatively dispensing with the
assumption that the Immediate Snapshot complex is a subdivided simplex.
}
\end{abstract}

\noindent\textbf{Keywords:} shared memory, distributed algorithms, wait-free, subdivided
simplex, asynchronous computability.


\newpage

\section{Introduction}
In \cite{ABD} Attiya Bar-Noy and Dolev showed that message passing
can simulate shared memory, by implementing read and write
operations when a majority of the nodes do not fail. But this
message passing model, in which majority of processors are alive, is
stronger than wait-free read write, as it can solve
$n/2$-set-consensus in it.
\ignore{Also, the simulation in \cite{ABD} is a wait-free simulation
\ignore{(modulo the assumed number of live processors) }
of the read and write operation. To solve tasks we need only non-blocking rather than
wait-free simulation.}
Here we address and resolve the following question: is there a message passing model
that is exactly equivalent to wait-free read write solvability?  That is, a network model
that can solve any task solvable wait-free in read write shared memory, and nothing more.

It was assumed that message-passing is not interesting when majority of processors can fail-stop, since then
network partition may lead to inconsistency.  This lead to investigating processor failure as source
of non-determinism.  In various models of iterated shared-memory \cite{BG97}, like old soldiers, processors do not fail,
they just fade away by being consistently late from some point on.  Iterated models have been found to be useful since they
can be thought of as sequence of tasks, but since to date they were proposed in the context of shared-memory
the logic synchrony embedded in them was not apparent.

Thus the impetus behind this work is to expose upfront the ``logical synchrony'' of the iterated shared-memory model
by real synchronous message passing models in which not receiving a message by the end of a round means that a message is not
forthcoming, and, on the other hand, show that all models of shared-memory can be investigated
within this synchronous message-passing framework.  We provide here such a synchronous model
that captures exactly what is wait-free computable in an atomic read and write memory.
\ignore{Thus we address and resolve the following open question: is there a message passing model
that is exactly equivalent to wait-free read write solvability?  That is, a network model
by which any wait-free read write task can be solved, and nothing more.}
In our models as well as in the iterated models processors do not fail, only some may be invisible to the rest
from some point on, which makes them ``faulty'' with respect to the rest.

\ignore{
Yet asymmetric synchrony has been studied ``in disguise.''
The models of iterated computations \cite{BG97} is a computation that is ``logically synchronous'' rather than expressively so.
In hindsight, the easiest way to understand what is the model of say iterated snapshots \cite{SergioOpodis},
is to think of synchronous complete network with a TP-complete adversary whose RCG has the constraint
that for each $p_i, p_j$ the incoming edges are related by containment.

Thus, we enrich the study of iterated computations by considering more refined communication patterns
than the ones implied by shared-memory. In fact, if we consider an iteration to be a task, then the model
of iterated snapshots is a concatenation of tasks in which processors submits abstract item and get a set
of submitted items. The snapshot tasks establish that the sets must be related by containment. More generally,
until now these tasks were considered at the granularity of shared-memory. But shared memory establishes
all sorts of constrained relations among the returned set. Here, we forgo the shared-memory and ask as
to why not study these tasks as tasks abstract tasks rather than only ones that are the result of some
shared-memory implementation.
}

We consider a synchronous round based dynamic complete network in
which processors do not fail but individual messages do. Like in a radio network,
messages may be asymmetrically dropped.
We consider the power of such networks to compute when in each synchronous round all
processors send messages to all their neighbors, and an adversary
can purge some of the messages.   Removing a message on a link by the adversary in one
direction does not imply the removal of the message in the other
direction.
The adversary is a predicate: What combinations of messages may be dropped in a round.
The predicate does not spread across rounds.
The adversary ``power" in each round is the same and is independent of what it did in the previous round.
\ignore{What happens in a
round in message loss is independent of what happened in the
previous round.}
An adversary $adv_w$ is weaker than adversary $adv_s$ if the predicate $adv_w \subset adv_s$, i.e.
every message failure combination in $adv_w$ is a message failure combination allowed to  $adv_s$,
but not vice versa.

We show an adversary called $TP$\ignore{W} (Traversal Path)
such that the resulting model is equivalent to read write wait-free (RWWF).
Moreover, adversary $TP$ is the most powerful among all adversaries that
can solve any RWWF solvable task.  

\ignore{What is the benefit of investigating shared-memory through synchronous message passing?
}
Following the presentation of the $TP$ adversary we exhibit how the derivation of the necessary part of
the celebrated Herlihy-Shavit conditions \cite{hs99},
for wait-free solvability, namely, that for a task to be RWWF solvable it has to color a subdivided simplex, is
``for-free.''  Originally, this result called upon heavy machinary in Algebraic-Topology.
Subsequently, it was simplified in \cite{BG97} with the
use of iterated Immediate-Snapshots, which still requires the
assumption that the Immediate Snapshot complex is a subdivided simplex (not obvious in high dimensions).
Here the proof is quite elementary.

This line of investigation also leads to a new classification, left to come, of network topologies.
Given a network, and an adversary that allows for solving RWWF solvable task, what is the power
of this adversary to solve tasks which are not RWWF solvable.
Compared to complete network, a lack of a link in a network can be viewed as a constraint on the 
adversary to always fail the two messages of a link, reducing its predicate and consequently
making it potentially strictly more powerful to solve tasks. 
To arrive at an adversary that solves exactly RWWF we use
here the topology of complete network.
Yet if we take a network which is a single simple undirected path, an adversary that
solves RWWF necessarily soles $2$-set consensus.  A processor can output
one of the two end nodes of the line.  

\remove{ goes until related work

****************************************************************worked until here...3:15 am my tuesday.
\ignore{
in dynanic netwrk if you just consider failure or non failure of
liknk s and not asymetry you get either system that can do consensus
or nothing. No ``nuances''

In this paper we go from investigating the computation power of a round based dynamic radio network
in solving distributed tasks to the introduction of a new ids-based task model that unifies shared memory,
dynamic synchronous message passing, and the iterated immediate snapshot model.

}

We consider a synchronous round based dynamic complete network in
which processors do not fail but individual messages do, and like in a radio network,
messages may be asymmetrically dropped by an adversary.
Different message dropping adversaries are examined,
in each synchronous round all
processors send messages to all their neighbors, and the adversary
can remove some of them.   Removing a message on a link in one
direction does not imply the removal of the message in the other
direction.
The predicate defining which messages may be dropped in a round
does not spread across rounds.  What happens in a
round in message loss is independent of what happened in the
previous round.  The computational power is the ability and inability to solve
tasks \cite{hs99}, such as, immediate snapshot, $2n-1$-renaming, consensus, or set-consensus, etc.
In investigating solvability we consider full-information protocols,
in each round each node sends all its history to its neighbors and as said
some of these messages may be dropped.

Consider a message successfully delivered from $p_i$ to $p_j$ in a round as a directed edge from $p_i$ to $p_j$, and
call the collection of directed edges in one round as the round communication graph (RCG).
We describe an adversary by a property of the directed graph RCG.
For example, in the strongly connected (SC) adversary, in each
round the RCG must be a strongly connected graph spanning all the nodes.
In a Traversal Path (TP) adversary, the RCG in each round must contain a directed (not necessarily simple) path
starting at one node and passing through all the nodes.
In a size $k$, $2<k\leq n$, connected component ($k$-CC) adversary,
RCG must contain at least one strongly connected component of size $k$ in
each round.  A dynamic network ruled by an adversary $G$ is called $G$-dynamic network.

Our main theorem is that a $TP$-dynamic network is equivalent to wait-free read write, that is,
it solves exactly all that is read-write wait-free
solvable.  Moreover, the $TP$ adversary is the strongest adversary that is equivalent to RWWF.
Given more power to the adversary and some tasks in RWWF cannot be solved, e.g., an adversary
that can eliminate all messages.
QQQQQQ
Thus, our $TP$ adversary establishes that the set of adversaries equivalent to RWWF is a lattice, as
we show that the Immediate-Snapshot Adversary \cite{}, is the weakest in the set. QQQQQ

Our next step is to show that any run of the TP-dynamic network colors a subdivided simplex.
We do this by showing $TP$ to be computationally
equivalent to another adversary called TP-complete which in turn is equivalent to TP-pairs.
The iteration in the model TP-pairs manifestly further subdivide a subdivided simplex.
An alternative way to specify TP, is that in each round for every pair of nodes, $p_i$, $p_j$ the RCG
contains at least a directed path from either $p_i$ to $p_j$ or from $p_j$ to $p_i$.
It is easy to see that this requirement is equivalent to requiring a traversal path.
In the TP-complete adversary this requirement is placed directly between any two nodes, i.e., on each
\ignore{ $p_i$-$p_j$} edge in each round a message is delivered in at least one direction or in both directions.
In the TP-pairs adversary we spread TP-complete over $n(n-1)/2$ rounds where in each such round
a message is sent in both directions of one unique edge and at most one of these two messages may be purged.
Clearly, TP-complete implements TP, and $n(n-1)/2$ rounds of TP-pairs implement TP-complete.
By showing that $2n$ rounds of TP implement TP-pairs we establish that these three models are equivalent.

The protocol complex of TP-pairs dynamic network is manifestly a subdivided simplex
since in
each round only two nodes interact; A round splits the
edge between the pair of nodes associated with the round in the inductively
subdivided-simplex into a path of 3 edges and connects the two
new points with the rest of the nodes of the simplexes in which the edge resides.
Obviously this is just a further subdivision of a subdivided simplex.  The
induction starts with the initial simplex in which each node is the initial state of each processor.

Our new view of asynchronous computability raises a new interesting question with respect to
models stronger than RWWF.  Our models of synchronous dynamic networks are more
refined.  Processors never die, like old soldiers they just fad away with respect to others.
Thus, if $t$-resiliency in the context of Shared-Memory says that at most $t$ processors will die,
we may ask, what about an adversary stronger than $TP$, which also guarantee a Strongly-Connected
component in each round which is at least of size $n-t$.
While $t$-resilient may be interpreted that at most $t$ processors will be slow, ours is
that at least $n-t$ processors will be about equal speed.

Additional contributions and observations made in this paper:
\begin{itemize}
\item
We provide an alternative way to prove that TP-dynamic network is equivalent to
wait-free read write.  We represent each
round of the dynamic network as a task, and the computation as an iterated computation of the task.
We then show that the TP task is equivalent to the wait-free single-writer-multi-reader task.
Thus establishing that TP is exactly RWWF.
\item
Our discussion considered message failures in complete network topologies.
What is the computability power of non-complete topologies dynamic networks with message failures.
For instance, in a line network either the adversary allows the two ends to communicate making all
the intermediate nodes repeaters, in which case we can solve $2$-set consensus
by outputting one of the end processors ids, or if the two ends cannot communicate
then neither can wait-free read write be simulated.
What if the topology is a ring network?  Or a complete network missing a single edge?
We resolve this questions for few networks but leave as an open problem the characterization of graph
that like complete network can solve a task that will be equivalent to RWWF.
\end{itemize}
}

\subsection{related work}

The closest we can recall studies of computational power against synchronous message adversary
is in the context of Byzantine agreement in the presence of an adversary that corrupts messages
rather than processors \cite{DR85}. Similarly, many papers touch on the communication requirements to
achieve consensus.  Indeed, when a communication link is either up in both direction or
down in both directions, we either have connectivity and consequently consensus, or we have disconnected
components with no coordination among them.

\ignore{Thus, the novelty of this paper is in considering asymmetric adversaries. Ones that can delete messages in one
direction but not necessarily in the other direction.

Thus again in hindsight the celebrated result of Attiya, Bar-noy, and Dolev \cite{}, is to ask the following question:
Give an adversary that will guarantee that at at least majority of processors are alive, i.e. an adversary
that can solve exactly ``minority''-resiliency. They propose a TP-complete adversary in which all
nodes have a fan in of at least $n/2$ and at least $n/2$ processors have a fan-out of at least $n/2$.
As said, in the context of processor failure solving RWWF implies solving minority resilient tasks.
Indeed same holds for the minimum failure detector $\Sigma$ for read-write introduced in
\cite{Rachid,Delapote,Hugues, JACM}. It not only allows for solving RWWF but also any minority-resilient task.
}

It is easy to see that read write shared memory implements the message passing model,
each processor sends a message to its
neighbor by writing the message into a dedicated single writer single reader register.
In the other direction Attiya Bar-Noy and Dolev \cite{ABD} show how to implement a single writer multi-reader register
in an asynchronous network in which a majority of the processors do not fail, essentially by
giving a sequence number to each value written and ensuring that each
value written or read is documented in at least a majority of the processors (like a quorum system).
However, this network model in which a majority of the processors do not fail is computationally stronger than
read write wait free, it can solve $n/2$-set consensus.  After hearing from $n/2$ processors a processor outputs the
minimum value it has heard about.

In \cite{klo10} Kuhn, Lynch, and Osman study dynamic networks that are also governed by an adversary.
In their case the adversary can erase edges (communication in both directions or none) and they restrict
the adversary to various types of eventually connected network \cite{ag88,ae84}, called T-interval connectivity.
While such a network can solve the consensus problem they investigated the complexity of key distributed computing
problems such as determining the network size, and computing any function on the ids of the processors (e.g., leader election =
consensus).

That the protocol complex of models that solve exactly RWWF task
contains a subdivided simplex was first established by Herlihy and
Shavit \cite{hs99}\ignore{ using the Meyer-Viatoris Lemma from
algebraic-topology \cite{} among other topological tools}.
In \cite{BG97}  this result was shown
using the iterated model notion and immediate snapshot tasks. It
assumed without proof that the immediate snapshot task is a
subdivided simplex.\ignore{ This fact was proved only recently \cite{Linial
- personal communication}. Here, we derive this result in the most
simple way. Showing that inductively a read-write algorithm takes a
set of 1-dimensional faces in a subdivided simplex, that do not
share a simplex. It subdivide this 1-dimensional faces and cone the
edge thus divided into 3 segments with the rest of the points of the
simplexes in which the edge resides.}

\ignore{
Finally, we use the same technique to show that as long as the
adversary is not forced to make the whole RCG a single SSC, then the
protocol complex of the resulting model is connected. Thus we get
also FLP \cite{} for free. Currently the simplest way of deriving
the FLP result is via the BG simulation \cite{}. By itself important
but not elementarily derived result.
}

\paragraph{Paper organization:} The paper is organized as follows.  In the next section we discuss
the model of synchronous dynamic network with message adversary in
more detail.  We then as a worm-up show ``procedurally'' how
our TP adversary implements Read-Write.  In a subsequent section we rely
on previous work that shows that iterated snapshots (IS) model
\cite{BG97,SergioOpodis} solves exactly any task that is RWWF solvable.
Then in a
declarative manner of tasks implementing tasks, we show TP in
complete networks is equivalent to IS.  Subsequently, in Section
\ref{section:subdivided}, we introduce an artificial model which can be easily seen
to be equivalent to TP.  We then show in elementary inductive way how
the artificial model gives rise to a protocol complex \cite{hs99} which
is precisely a subdivided simplex.  In Section \ref{section:subdivided} we
detail the construction for $n=3$ and in the appendix the construction is detailed
for an arbitrary $n$.  Finally we close with conclusions.

\section{Model}
\label{sec:model}

This paper deals with the relations between two basic models in distributed computing, the read write shared memory, and
the synchronous message passing with message delivery failures.  In either model there are $n$ processors, $p_1, \ldots p_n$.
In the shared memory we consider the standard read write shared memory model \cite{H91} in which all communication
between processors is via writing to and reading from shared single-write multi-reader atomic registers.

The network model we assume is
synchronous complete network (unless stated otherwise) in which in each round each processor sends its
entire history to all other nodes in the network.  In each round an adversary may purge a subset of the
messages sent.  All other messages are received by their destination by the end of the round.
Depict a message successfully delivered from $p_i$ to $p_j$ in a round as a directed edge from $p_i$ to $p_j$.
The collection of directed edges in one round is called the round communication graph (RCG).
The adversary is specified by a property that must be satisfied by any directed graph RCG it can create.
In the strongly connected (SC) adversary, in each
round the RCG must be a strongly connected graph spanning all the nodes.
In a Traversal Path (TP) adversary, the RCG in each round must contain a directed (not necessarily simple) path
starting at one node and passing through all the nodes.
A dynamic network ruled by an adversary $\Gamma$ is called $\Gamma$-dynamic network.

We extensively us an adversary we call TP-complete. This adversary is defined only with respect to underlying
complete network. Its predicate is that RCG contains a tournament. In other words, of the two messages on a link
sent in a round TP-complete can purge at most one. For complete networks TP-complete and and TP-dynamic are
shown equivalent. TP-complete captures one property of shared-memory, namely, two processors cannot miss each
other. The model of SWMR collect has more properties than TP-complete (these properties can be called ``fat immediate snapshots'').

We consider only ``anonymous'' adversaries. Whether a graph is a valid RCG for the given adversary is invariant to renumbering of nodes.

A task \cite{hs99} is a distributed computational problem involving
$n$ processors.
Each participating process starts with a private input value,
exchanges information with other participating processes and eventually outputs a value.  The task is specified by a relation
$\Delta$ that associates with every input vector (one element per
participating processor) a set of output vectors that are allowed
given this input. See \cite{hs99} for a more formal definition of a task.

We now describe the meaning of computation in synchronous dynamic
network. An input is an abstract ``item.'' Computation evolves in
synchronous rounds. In the first round each processor sends a
message consisting of a pair $(my id, my item)$ to all neighbors.
Some messages get through some are deleted by the adversary. The
messages that make it through have to comply with the predicate
$\Gamma$ that defines the RCGs that the adversary must maintain.  In
every subsequent round inductively a nodes sends a message with all
its history to all its neighbors. A protocol to solve a task $T$
with an adversary $\Gamma$ is associated with a number $k$, and
after $k$ rounds each processor takes its history and maps it to an
output of $T$. The protocol solves $T$ if under the condition that
in every round the communication graph RCG agrees with $\Gamma$ then the
outputs of all processors are valid combination through $\delta$
with respect to the initial items which are the inputs.

Interchangeably, we will view a round as a task: The abstract ``item'' is the processor id,
and the output of a node/processor is a set of ids. The task is then defined through a predicate
on the combination of returned sets. Obviously, any variant of a model of ``iterated shared memory''
can be captured by such task, and consequently as an adversary in our network.
Notice that the view of task as being ``invoked'' by a processor or not is mute when we compute in synchronous
networks. Processors do not ``crash,'' they always are together in a round. It is just that some may not be seen
in a round. Thus, say, $t$-resiliency is an adversary whose RCG has a SCC of size at least $n-t$, and this SCC is
a source SCC in the SC graph of the RCG.

\ignore{
\subsection{Synchronous Dynamic Network Results}

\begin{enumerate}

\item
The Strongly Connected Component graph (SCCG) of a TP-adversary is
an acyclic tournament. (We call the single source node in SCCG-TP to
be the ``source SCC'' and the single sink node ``sink SCC'')

\item
DISC item !!! A TP-adversary with the additional constraint that the source SCC is
of size $n-t$ solves all $t$-resilient tasks. In case the network is
complete it solves only these tasks.

\item
To solve consensus the source, and the single SCC, has to be of size
$n$, the number of nodes. Consequently, we automatically obtain FLP
\cite{}, as the adversary for 1-resiliency is not required to have a
single SCC.

\item
In a complete network the protocol complex of a model of computation
we call TP-complete-pairs which is shown equivalent to TP-adversary
and therefore to RWWF, is a subdivided simplex.

\end{enumerate}
}

\section{TP-dynamic network implements wait-free read write}
\label{sec:king}
In this section as a warm-up we show a direct implementation of asynchronous read-write wait-free by a TP-complete dynamic
network. To solve a task RWWF it is enough that the emulation is non-blocking.
As some processor takes enough steps it eventually gets an output.
From there on, processors can ``ignore'' it, and some other processors will progress, continuing inductively until all
processors terminate.

In the next section we show that TP-dynamic network is equivalent to TP-complete dynamic network, each can implement the other.
Thus here we implement the RWWF in a TC-complete dynamic network.
Thus TP-dynamic network is potentially stronger than the RWWF model.  In the next section (Section \ref{sec:tasks})
we also show that TP-dynamic is equivalent to
iterated snapshot model which is equivalent to RWWF solvability \cite{SergioOpodis}.  Thus deriving both directions.
We nevertheless go through the direct implementation as a warm-up and making the paper self contained.

To simulate the execution of a RWWF run of task $T$ in a TP-complete dynamic network
we will simulate the simplest read write model, a run in which all writes and reads are
to single writer single reader registers.  As we know such a run implements any RWWF execution.
Each processor $p_i$ starts with its input in $T$ and registers its first write, $<p_i,w^i_1,1>$.
Its write, $w^i_1$ is size $n$ vector with the values it writes to each of the single writer single reader
registers it writes to, with a $\bot$ for registers in which it does not write at this point.
To simplify notations and exposition we write just $w^i_x$ representing this vector, i.e., each $w$ below is
a vector with the write values $p_i$ writes to each other SWSR register.

Each processor maintains a vector of triplets $<p_j, w^j_k, k>$ indicating that the most advanced
write of $p_j$ it has heard is $w^j_k$, the $k$th write of $p_j$.
At the end of each round a processor updates its own vector with most advanced
write values it finds for each other processor in all the vectors it received.
In the messages sent, each processor sends the vectors it received (directly or indirectly) for other processors,
and its own vector after it has been updated.  For each other processor, $p_k$ it sends the most advanced
vector it has received for $p_k$ (there is a simple total order on the vectors sent by each processor
in the course of execution).
Thus each processor keeps a copy of the most updated vector it knows for each other processor.
Therefore, a processor can tell which writes, to the best of its knowledge, each other processor has already received.

As we prove below after a finite number of rounds
there must exists at least one node $s_1$, such that its write value, $<s_1,w^{s_1}_1,1>$
has reached all other nodes in the network,
and node $s_1$ knows that its first write value has reached all others.

At this point, when a node knows that its write value has reached all other nodes it has finished its
first write operation
and performs a read operation.  For each $p_j$ the value it reads is $w^j_k$ where $k$ is the highest
index $l$ among all the triplets $<p_j,w^j_l,l>$ in its set.  If no such pair exists it returns
the value $\bot$.  The read by $s_1$ is linearized at that point (nobody could have written at that
point a higher index write since to terminate a write the writer must ``know''
everybody has its last write in their set and $s_1$ does not have it prior to this point).
We linearize a write operation at the time its value was first read by some processor.
Since we write to SWSR registers this linearization is legal.

Continuing inductively, following a round in which $p_i$ finished its $k$th
write operation since at that round it also read, it calculates it $w_{k+1}$ write values
(one for each SWSR register) and inserts the pair $<p_i,w^i_{k+1},k+1>$ into its vector.
Thus inductively, for any finite $r$, after a finite number of iterations there must be at least one node $s'_1$
such that $s'_1$ has completed $r$ write and read operations.
Since task $T$ is a wait-free task, there is a finite $r$ such that after $r$
iterations $s'_1$ outputs.  From now on it cannot ``calculate'' its next write because it has finished its
execution in task $T$.
Indeed, if its vector when it outputs is $my-out-vector$ it inserts a pair $<s'_1,my-out-vector,.>$,
and from now on becomes just
a repeater participant. Never finishing a new write.
Processors that see the pair $<s'_1,my-out-vector,.>$ can calculate the output of $s'_1$, and from now on
in their ``know'' function can exclude $s'_1$.

A key observation in our simulation is that
from this point on the other nodes do not expect to hear from $s'_1$ and node $s'_1$ becomes a relay,
it forwards in each round to all its neighbors the union of all the messages it has received in the previous round.
Therefore, once such a subset $S_1$ of nodes has completed their computation in $T$,
there must exists at least one node $s_2 \in V\setminus S_1$ such that
its value reaches all the nodes in $V\setminus S_1$, and $s_2$ knows that.
Thus inductively a subset
$S_2$ completes $r$ iterations and finishes its computation of $T$ and outputs.  This process continues until all the nodes complete the
computation of $T$ and output.

To prove the above we need to show that after $\ell$ rounds for some finite $\ell$, there is a
node $p_i$ in a dynamic network with adversary TP-complete such that $p_i$'s write value has reached
all other nodes and node $p_i$ knows that, we call node $p_i$ the {\em king} node.
Each node uses the following condition to
decide that it is a king node:

\begin{theorem}
\label{thm:king}
\ignore{
Node $p_i$ is king if for $n$ rounds node $p_i$ does not hear from any processor that did not report to have heard $p_i$'s write.
in other words:

Node $p_i$ is king if for $2n-1$ rounds all the vectors in messages it receives report to have heard
$p_i$'s last write in them (i.e., for $2n-1$ rounds $p_i$ does
not hear from processors that do not report to have heard from $p_i$).}

Node $p_i$ is king in round $t$ if in round $t$ $p_i$ does not hear directly from any other processor that
does report to have heard $p_i$'s write value.
I.e., from each other node $p_j$ in round $t$ either $p_i$ does not get a message from $p_j$
or $p_i$ gets a message from $p_j$ in which $p_j$'s vector contains the last write of $p_i$.

\end{theorem}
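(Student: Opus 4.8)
The plan is to derive the claim directly from the single defining property of the TP-complete adversary on a complete network: in every round, on every link, the adversary may purge at most one of the two directed messages, so in each round at least one direction of every link is delivered. Fix a node $p_i$ and a round $t$ satisfying the king condition, and let $\langle p_i, w^i_k, k\rangle$ be $p_i$'s current (last) write triplet. Since a processor always carries its own current write in its vector, and $p_i$ has not produced a new write during round $t$ (it does so only upon finishing a write, i.e., only upon becoming king), the message $p_i$ broadcasts in round $t$ contains $\langle p_i, w^i_k, k\rangle$. What must be shown is (i) that every node $p_j \neq p_i$ holds $\langle p_i, w^i_k, k\rangle$ in its vector by the end of round $t$, and (ii) that $p_i$ can conclude (i) purely from what it observes in round $t$.

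Fix $p_j \neq p_i$ and split on whether $p_i$ hears from $p_j$ in round $t$. First, if $p_i$ receives a message from $p_j$ in round $t$, then by the king condition that message carries $p_j$'s vector with $\langle p_i, w^i_k, k\rangle$ in it, so $p_j$ already held $p_i$'s last write by the end of round $t$, and $p_i$ sees this directly. Second, if $p_i$ receives no message from $p_j$ in round $t$, then the message $p_j \to p_i$ was purged; since the adversary may purge at most one message per link, $p_i \to p_j$ was delivered in round $t$, and that message carries $\langle p_i, w^i_k, k\rangle$, so $p_j$ obtains $p_i$'s last write during round $t$ — and $p_i$, merely from the absence of a message from $p_j$, invokes this same reasoning to conclude that its own message reached $p_j$. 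In either case $p_i$ knows that $p_j$ holds $w^i_k$; ranging over all $p_j$ yields (i) and (ii), i.e., $p_i$ is a king.

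The entire content of the argument is concentrated in the second case of (ii): $p_i$ learns that its outgoing message was delivered not from any acknowledgement but purely from the \emph{absence} of the reverse message, using the tournament property — which is exactly why the king condition is phrased as ``not hearing directly from'' an un-acknowledging processor rather than demanding an explicit acknowledgement from everyone. I expect this to be the only point a reader might pause on; the remainder is routine bookkeeping about the full-information vectors. Note that only soundness of the king test is asserted here (the test need not be complete), while the companion fact that after finitely many rounds some node must pass the test is established separately.
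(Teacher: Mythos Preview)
Your soundness argument is correct and matches the paper's reasoning exactly: the two-case split on whether $p_i$ hears from $p_j$ in round $t$, with the tournament property (at most one message per link purged) handling the second case, is precisely what the paper compresses into ``the nodes it did not hear from in $t$ certainly have heard from it in $t$; the other nodes directly report to the king to have received its write value.'' You have simply unpacked that sentence.

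Where you diverge from the paper is in scope. You explicitly defer the existence claim---that after finitely many rounds some node actually satisfies the king condition---saying it ``is established separately.'' In the paper it is \emph{not} established separately: it is the first half of this very proof. The paper invokes Landau's theorem \cite{lan53} (every tournament contains a king, i.e., a vertex from which every other vertex is reachable by a directed path of length at most $2$) to argue that eventually the dynamic TP-complete network produces a round in which some node passes the test. This is the half that drives progress of the non-blocking simulation in Section~\ref{sec:king}; soundness alone does not. So your proposal handles the more routine direction cleanly but omits the part that requires an external combinatorial fact, and your assumption that existence lives elsewhere in the paper is mistaken.
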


\begin{proof}
By \cite{lan53} there is a king-L (kink-L is a node that all other nodes are reachable from it in
a length at most $2$ directed path) in a static tournament directed graph.  Therefore eventually after enough rounds of the dynamic
TP-complete network there is a node that is a king as defined in the theorem.  Thus eventually such a node
exists.  Clearly the condition of the theorem tells the node it is a king because the nodes it did not hear from
in$t$ certainly have heard from it in $t$.  The other nodes directly report to the king to have received its write value.
\ignore{
By contradiction.  Consider node $p_i$ at a round in which the theorem condition is satisfied, and
let $p_j$ be a processor that did not receive $p_i$'s write value by that round.
Clearly $p_i$ does not hear from $p_j$ (directly or indirectly) in the last $n$ rounds.  Let $R_j$ be the set
of processors that $p_j$ heard from them, and $\bar{H_j}$ the set of processors that did not hear from $p_j$ in the last $2n-1$ rounds.
Since by PT in each round the RCG contains a traversal path then in each of the last $2n-1$ rounds either $R_j$ or $\bar{H_j}$ increases by
one, i.e., either $p_j$ received $p_i$'s write value, or $p_i$ receives $p_j$'s vector, a contradiction.
}
\end{proof}

\ignore{
Similarly we can prove that when discarding a node, such as a node that has produced an output, node $p_g$ is king if for $n$ rounds,
after excluding any node that has output (has completed its computation):
\begin{enumerate}
\item
Node $p_g$ does not hear about any new other node, and
\item
Any node $p_j$ that node $p_g$ has heard about also reported to $p_g$
that it has heard from $p_g$ or $p_j$ is reported to have produced an output.
\end{enumerate}
I.e., in both point $1$ and $2$ above nodes that have reported to terminate the computation are not taken into account.

Thus to simulate the RWWF execution of task $T$ each node $p$ repeatedly, in each round, receives messages, adds their content to $p$'s history and
sends its entire history to its neighbors.  If $p$'s history does not contain information about any new node for $n$ consecutive rounds, and
any node $p$ has heard about also reports in $p$'s history to have heard the last value $p$ is writing in the execution of $T$, then node $p$ has completed
another write operation.  It then performs a read based on all the messages it has received.  Following that read $p$ either outputs and reports its state
to be ``{\em done}", or $p$ generates a new write value in $T$ and repeats the above procedure.
It follows from Theorem \ref{thm:king} that eventually all the nodes terminate and produce their output in $T$.
}

\paragraph{TP is the strongest adversary in which RWWF can still be simulated.}  There are many different ways by which
the TP adversary can be made stronger.  For example, allow it to remove one more message, or in any RCG there is a simple
{\em undirected} path spanning all the nodes but not a directed path.  However notice that in any way by which TP can be made stronger there is
at least a pair of nodes, $p_i$ $p_j$ such that neither can guarantee to communicate with the other.
Thus there is no way to simulate either that $p_i$ wrote and $p_j$ read what $p_i$ wrote or vice versa.
Therefore, any read-write wait-free task in which either $p_i$ returns $p_j$ or vice versa cannot be implemented, e.g., snapshot.

\section{TP-dynamic network equivalent to wait-free single-writer-multi-reader shared memory}

\label{sec:tasks}

In \cite{BG97} and \cite{SergioOpodis} it has been shown that
the iterated snapshot model, called IIS, solves only RWWF tasks.
Thus IIS is equivalent to SWMR RWWF model.
In the snapshot task \cite{ATOMIC}
each processor $p_i$ submits a unique ``item'' and returns a
set  $S_i$ of submitted items which includes its own, and the returned sets are
related by containment.  In an iterated model of task $T$ in which items are
submitted and set of submitted items are returned, called IT, there is a sequence
of independent copies of the task, $T_1, \ldots, T_j, \ldots $ and each process goes through
the sequence in order.  It starts
by submitting its input to $T_1$ and then inductively submitting to $T_j$ with its output from $T_{j-1}$.
Different iterated models have been considered, where $T=IS$, the iterated snapshot $AS$ model $T=AS$,
as well as iteration of the collect task \cite{collect1,AAFST}.

Here we show that TP-dynamic network is equivalent to the iterated snapshot (IAS)
model which together with the result of \cite{SergioOpodis} proves that TP-dynamic
is equivalent to RWWF.
\ignore{as we find fixed number of rounds of TP-dynamic that
implements the task of one-shot snapshot.}

\paragraph{The snapshot task implements a round of TP-Dynamic network:}
The items submitted to the snapshot are the messages in a round.
If $p_i$ in the snapshot task returns the item of $p_j$ (or $p_j$
for short) we consider it that the message from $p_j$ to $p_i$ was
successful.  Since snapshot is an instance of shared memory
collect processor do not miss each other, and RCG contains a tournament.
Thus, the snapshot task implements one round of TP-complete.
Since each tournament contains a directed path spanning all the nodes \cite{BNaor},
we get that a tournament is an instance of TP-dynamic.
Thus, all tasks solvable by TP-dynamic are solvable in IAS.

\paragraph{TP-Dynamic network implements a snapshot:}
To show that TP-dynamic implements snapshot we let the model
TP-complete mediate between them.
\ignore{We first show that TP-dynamic implements TP-complete then that TP-complete implements
snapshot.}

We first show that TP-dynamic implements TP-complete in $2n-1$ rounds.
In each round a processor $p_i$ sends to all its neighbors the set $H_i$ of inductively all the ids
it has heard so far, starting by setting $S_i=\{p_i\}$ in the beginning of first round, and sending $S_i$.
At the end of each subsequent round it just sets $S_i$ to the union of his set with all the sets it received in the round.
\begin{claim}
After $2n-1$ rounds for every $p_i$ and $p_j$,
either $p_i \in S_j$, or $p_j \in S_i$, or both.
\end{claim}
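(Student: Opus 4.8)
The plan is to reinterpret the set $S_i$ as a time‑respecting reachability set and then run a short counting argument against the adversary.

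First I would record the standard invariant: after $t$ rounds the value of $S_i$ is exactly the set of processors $u$ from which there is a \emph{journey} to $p_i$, i.e.\ a sequence $u=v_0,v_1,\dots,v_\ell=p_i$ together with rounds $s_1<s_2<\cdots<s_\ell\le t$ such that the directed edge $(v_{r-1},v_r)$ is in the round‑$s_r$ communication graph. This is a one‑line induction from the update rule $S_i\gets S_i\cup\bigcup\{S_v:(v,p_i)\text{ is an edge of round }t\}$, the key point being that a journey that uses round $t$ must use it as its \emph{last} edge. Dually, let $O_i(t)$ be the set of $v$ for which there is a journey from $p_i$ to $v$ within rounds $1,\dots,t$; then $O_i(0)=\{p_i\}$, $O_i$ is monotone non‑decreasing in $t$, $O_i(t)=O_i(t-1)\cup\{w:(v,w)\text{ is a round-}t\text{ edge for some }v\in O_i(t-1)\}$, and $p_i\in S_j(t)$ iff $p_j\in O_i(t)$.

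Next, fix $p_i,p_j$ with $i\ne j$ (the statement is trivial otherwise) and suppose toward a contradiction that after $2n-1$ rounds neither $p_i\in S_j$ nor $p_j\in S_i$; equivalently $p_j\notin O_i(t)$ and $p_i\notin O_j(t)$ for every $t\le 2n-1$. I would use the alternative description of the $TP$ condition stated earlier: in every round the round communication graph contains, for the pair $(p_i,p_j)$, a directed path from $p_i$ to $p_j$ or from $p_j$ to $p_i$. (Starting instead from the traversal‑walk definition, the sub‑walk of the traversal walk between the first occurrences of $p_i$ and of $p_j$ is such a directed path.) Over the $2n-1$ rounds, at least $n-1$ of them have the same one of the two directions; by symmetry assume rounds $\tau_1<\cdots<\tau_{n-1}\ (\le 2n-1)$ each contain a directed walk from $p_i$ to $p_j$.

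The only real content is that $O_i$ grows strictly in each such round $\tau_k$. In round $\tau_k$ write the guaranteed walk as $p_i=x_0\to x_1\to\cdots\to x_m=p_j$. Since $x_0=p_i\in O_i(\tau_k-1)$ and $x_m=p_j\notin O_i(\tau_k-1)$, there is a least index $s$ with $x_s\in O_i(\tau_k-1)$ and $x_{s+1}\notin O_i(\tau_k-1)$; as $(x_s,x_{s+1})$ is a round‑$\tau_k$ edge, $x_{s+1}$ enters $O_i$ in round $\tau_k$. (Non‑simplicity of the walk is irrelevant.) Hence $|O_i(\tau_{n-1})|\ge|O_i(0)|+(n-1)=n$, so $O_i(\tau_{n-1})=V\ni p_j$, contradicting $p_j\notin O_i(2n-1)$; the mirror case contradicts $p_i\notin O_j(2n-1)$. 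Therefore after $2n-1$ rounds $p_j\in S_i$ or $p_i\in S_j$. I expect the step that needs the most care is the reduction of the $TP$/traversal hypothesis to "a directed $p_i$‑to‑$p_j$ or $p_j$‑to‑$p_i$ path exists each round" and keeping the bookkeeping valid for non‑simple walks; the counting is then immediate, and in fact shows that $n-1$ rounds with a consistent direction already suffice, so the stated $2n-1$ is comfortably enough by pigeonhole.
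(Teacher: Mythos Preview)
Your proof is correct and essentially the same as the paper's: both define the outward reachability set (your $O_i$, the paper's $H_i=\{p_k:p_i\in S_k\}$) and use the crossing-edge observation along the TP-guaranteed $p_i$--$p_j$ path (what the paper calls ``Sperner Lemma for dimension 1'') to show this set strictly grows. The only cosmetic difference is bookkeeping: the paper tracks $|H_i|+|H_j|$ directly, noting it increases by at least one in every round and hence exceeds $2n$ after $2n-1$ rounds, whereas you first pigeonhole to isolate $n-1$ rounds with a path in a fixed direction and grow just one of the two sets.
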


\begin{proof}
Let $H_i$ at the end of a round be the set of nodes $p_k$ such that $p_i \in S_k$.
If neither $p_i \in H_j$ nor $p_j \in H_i$, then, using Sperner Lemma for dimension 1, in the next round
of TP-dynamic a message is successful from either a node in $H_i$ to a node not in $H_i$ or
from a node in $H_j$ to a node not in $H_j$, or both.  Thus in each round the size of at least one of the sets
$H_i$ and $H_j$ increases by one.
\end{proof}

To show that TP-complete implements snapshot we go through $n$
rounds of TP-complete.
As before, in each round a processor $p_i$ sends to
all other nodes all the ids, $S_i$, it has obtained by now.  Let
$S_j[k]$ be the the set $S_j$ at the end of round $k$.
Then, if at the end of round $\ell$, $|S_j[\ell]|=\ell$, $p_j$ returns
$S_j[\ell]$.

\begin{claim}
The sets $S_i, i=1,\ldots n$ thus returned in the above algorithm are snapshots.
I.e., $\forall i, p_i\in S_i$ and $\forall (i$ and $j), S_i \subseteq S_j$ or $S_j \subseteq S_i$.
\end{claim}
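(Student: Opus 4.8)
The plan is to track, for each processor $p_j$ and each round $k\in\{0,1,\dots,n\}$, the set $S_j[k]$ of ids it holds at the end of round $k$ (with $S_j[0]=\{p_j\}$; recall every processor keeps sending $S_j$ for all $n$ rounds), and to use the single structural consequence of TP-complete: in every round, for every pair $p_i,p_j$ at least one of the two messages succeeds, and whenever the message $p_m\to p_j$ succeeds in round $k$ we have $S_m[k-1]\subseteq S_j[k]$; also $S_j[k-1]\subseteq S_j[k]$ always. The part ``$p_i\in S_i$'' is immediate since $p_i\in S_i[0]\subseteq S_i[\ell_i]$, so the real work is the comparability of the returned sets $A_j:=S_j[\ell_j]$, where $\ell_j$ is the first round with $|S_j[\ell_j]|=\ell_j$.

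First I would check the return is well defined, by a potential argument: put $f_j(k)=|S_j[k]|-k$. Then $f_j(0)=1>0$, $f_j(n)\le 0$ since $|S_j[n]|\le n$, and $f_j(k)\ge f_j(k-1)-1$ since $S_j$ is nondecreasing. Hence the first round $\ell_j$ at which $f_j\le 0$ in fact has $f_j(\ell_j)=0$ (it cannot jump past $0$) and $\ell_j\le n$; equivalently $|S_j[\ell_j]|=\ell_j$ and $|S_j[m]|\ge m+1$ for every $0\le m<\ell_j$. A small but crucial consequence (``stabilization'') is $S_j[\ell_j-1]=S_j[\ell_j]=A_j$: the inclusion $S_j[\ell_j-1]\subseteq S_j[\ell_j]$ together with $|S_j[\ell_j-1]|\ge\ell_j=|S_j[\ell_j]|$ forces equality.

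For comparability, take any $i,j$ and assume w.l.o.g.\ $\ell_i\le\ell_j$; I will show $A_i\subseteq A_j$. The key idea is to examine the pair $(p_i,p_j)$ \emph{in round $\ell_i$} (the round in which $p_i$ freezes), not in round $\ell_j$. By TP-complete one of the two messages succeeds. If $p_i\to p_j$ succeeds, then $A_i=S_i[\ell_i-1]\subseteq S_j[\ell_i]\subseteq S_j[\ell_j]=A_j$ (using stabilization of $S_i$, the success, monotonicity of $S_j$, and $\ell_i\le\ell_j$), and we are done. The only remaining, apparently wrong-way case is $p_j\to p_i$: then $S_j[\ell_i-1]\subseteq S_i[\ell_i]=A_i$, so $|S_j[\ell_i-1]|\le\ell_i$. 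But $\ell_i-1<\ell_j$, so the potential argument gives $|S_j[\ell_i-1]|\ge\ell_i$; hence $|S_j[\ell_i-1]|=\ell_i=|A_i|$ and, being contained in $A_i$, $S_j[\ell_i-1]=A_i$. Therefore $A_i=S_j[\ell_i-1]\subseteq S_j[\ell_j]=A_j$ again. Either way the returned sets form a chain and the claim follows.

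I expect the main obstacle to be exactly this last case: a naive attempt looks at round $\ell_j$ and gets stuck when only the message from $p_j$ reaches $p_i$, because $A_i$ is already frozen and $p_i$'s later knowledge is irrelevant. The fix — looking instead at round $\ell_i$ and playing the frozen size $|A_i|=\ell_i$ against the lower bound $|S_j[\ell_i-1]|\ge\ell_i$, which holds precisely because $p_j$ has not yet returned — is the one place that uses both halves of the potential argument, and it is what lets the whole proof go through with no induction.
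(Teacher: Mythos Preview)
Your proof is correct and takes a genuinely different route from the paper. The paper argues by induction on rounds, carrying the invariant that by the end of round $k$: at most $k$ processors have returned, they all returned the same set of size $k$ (containing all earlier returned sets), and every non-returned processor's set contains this largest returned set and has size at least $k+1$. The inductive step uses TP-complete on a pair at round $k$ to show that any two processors returning at that round must have identical sets, and that any non-returning processor must have absorbed the returned set.

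Your argument is direct: you isolate the single round $\ell_i=\min(\ell_i,\ell_j)$ and combine the TP-complete guarantee there with the stabilization identity $S_i[\ell_i-1]=A_i$ and the size lower bound $|S_j[m]|\ge m+1$ for $m<\ell_j$. The ``wrong-way'' case is resolved by squeezing $|S_j[\ell_i-1]|$ between $\ell_i$ (from containment in $A_i$) and $\ell_i$ (from the lower bound), forcing $S_j[\ell_i-1]=A_i$. This is slicker than the paper's invariant-maintenance, and as a bonus you explicitly verify that every processor eventually returns, something the paper leaves implicit. What the paper's induction buys, if anything, is a round-by-round picture of the execution (exactly one new set value can appear per round, of size equal to the round number); your pairwise argument recovers the same facts but less visibly. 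Your remark that processors keep sending for all $n$ rounds is not actually needed in the two cases you analyze, since in each case the sender has not yet passed its own return round.
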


\begin{proof}
By induction.
Assume that by the end of round $k-1$ at most $k-1$ processors returned
and they returned snapshots, and all the processors which did not
return have sets containing the largest (snapshot) set returned so far and all
the sets are at least of size $k$ at the beginning of round $k$.

Observe that the inductive assumption holds at the end of round $k=1$.
First we show that at round $k>1$ only a single set can be returned.
Since each of the pair $p_i, p_j$ has a set of size $k$ or more (i.e., $|S_i[k-1]|\geq k$ and $|S_j[k-1]|\geq k$),
and one of
them at least hears from the other, then if processor $p_i$ returns
then it saw only sets identical to his own ($S_i[k-1]$), so either $p_j$ does not return if
it has a different set, or it returns since it has the same set.

If $p_i$ returns, then it received messages only from processors with
a set identical to his ($S_i[k-1]$), thus $|S_i[k-1]|=k$) that
$p_i$ returns was sent to all processors who do not have an identical
set, consequently they either heard already of $k+1$ items
or they heard about $k$ items and $S_i$ adds at least one more since it
is different. Thus the hypothesis that the set that
continue to round $k+1$ contain $S_i$ if $S_i$ was returned is maintained.
establishing containment.

Since processors heard about themselves and the maximum size
returned by now is $k$ then the number of processors returning is at
most $k$.
\end{proof}

Thus we have established that TP-dynamic is equivalent to RWWF.

\remove{\subsection{TP-dynamic network as an iterated task}
Thus, w.l.o.g we can view a round of communication as if each processor send its id to all its neighbors.
At the end of the round a processor returns its id and all the other ids it has received.
We argue that $k$ rounds computation of a full-information protocol is just the concatenation of
$k$ tasks where instead of id a processor send also its input, and the output of a task, is the input to the succeeding
task.

We then ask what will be the weakest task $W$ to solve all that is solvable RWWF?
Task $W$ will be the weakest in the sense that any other task in which a processor returns its id and a subset of id
of neighbors has its output vector set, a subset of the vector output set of $W$.

Such a question was asked in the opposite direction: What is the task $S$ that returns ids such that the elimination
of any output vector will render the task stronger than RWWF, i.e. it will solve tasks like set consensus |cite{}.
This task turns out to be the immediate-snapshots task \cite{}, and not surprisingly it helped in characterizing RWWF
algorithmically.  \cite{}. But it suffers teh drawback that one has to learn what immediate-snapshots are, and then take
on face-value that the structure of immediate-snapshots corresponds to subdivided simplex.

Thus, this paper establishes that RWWF is sandwiched between a
stronger task $S$ from the top, which is the immediate snapshot
task, and the weakest task $W$ from the bottom. This task $W$ we
call $TP$ for Traversal-Path: When we naturally view processors as
nodes, and processor $p_i$ returning the id of $p_j$ as a directed
edge from $p_j$ to $p_i$, then traversal-path is the condition that
in the transitive-closure of the graph that results from any
possible output, for every pair $p_i,p_j$ there is at least a
directed path in one direction, or the other, or both.

We show $W$ to be equivalent to $S$ and establish its maximally of possible outputs.
}

\section{TP-dynamic network colors a subdivided simplex}

\label{section:subdivided}
To show that the outputs of a multi round execution of TP-dynamic network colors a subdivided simplex we show
that another ``adversary'' (now we spread the predicate over rounds) called TP-pairs implements TP-dynamic, and show that the outputs of TP-pairs color a subdivided
complex.
We show the equivalence of TP-dynamic to TP-pairs,
by showing that TP-pairs is equivalent to TP-complete.
That the latter is equivalent to TP-dynamic has been shown in  \ref{sec:tasks}.

Recall the definition of TP-complete:  In the TP-complete adversary in each round on each
\ignore{ $p_i$-$p_j$} edge at least one message is delivered in one direction or both directions.
In the TP-pairs adversary we spread TP-complete over $n(n-1)/2$ rounds where in each such round
a message is sent in both directions of one unique edge and at most one of these two messages may be purged.

Clearly, TP-complete implements TP-pairs by going $n(n-1)/2$ rounds and at each round
ignoring anything which is not associated with the particular edge of the round. To see that TP-pairs implement TP-complete,
we go for $n(n-1)/2$ rounds where in each round in which a processor send it sends
what it sent the first time it was scheduled. At the end of the round a processor just collects
all the messages it received.

\subsection{The TP-pairs Protocol-Complex at an End of a Round}

Here we show that the protocol-complex of TP-pairs is a subdivided complex.
Keeping the exposition simple, we avoid unnecessary formalism and notation by restricting
this section to $n=3$.  The $n=3$ case generalizes to higher $n$'s in a straightforward way.
For completeness we repeat the construction given here, but for arbitrary $n$ in Appendix
\ref{app:subdivided}.

Consider all the possible local states of the $3$ processors $p_0, p_1, p_2$ after round
$r$ and make a graph $G_r$ out of it.  The nodes of $G_r$ are the pairs of processor-id
and its possible local state at the end of round $r$.  Two nodes are connected by an edge
between them if there exists an execution $E$, that is an instantiation of the TP-pairs adversary
in rounds $1$ to $r$, such that the corresponding processors are in the corresponding
states.  Assume inductively that $G_r$ is a 3-colored triangulated triangle.

The process starts with a triangle of the 3 processors in their initial state.

Let round $r+1$ be a round in which messages are sent (only) on the edges of type $(p_i, p_j)$.
How does the different behaviors of the
adversary in round $r+1$ change the graph?  In place of each
$(p_i, p_j)$ edge we get now a path of $3$ edges $(p_i, p_j'),(p_j',p_i'),(p_i',p_j)$. The
first node in the new path is a $p_i$ node with a state in which $p_i$ appends to its
local state that it did not receive a message from $p_j$ in round $r+1$.  Analogously, the
node $p_j$ at the other end of the path.  The nodes in the middle, $p_j'$ and $p_i'$, are
nodes in which the corresponding processor appends to its state the content of the
message it received in round $r+1$.  The nodes corresponding to the third processor $p_k$
has only one new incarnation $G_r$ to $G_{r+1}$ as $p_k$ records that did not receive any message
in the round.
Obviously a node of type $p_k$ that was connected to
a $(p_i, p_j)$ edge at the end of round $r$ is now connected to the four nodes of the
path that replaces the $(p_i, p_j)$ edge in $G_r$.  See Figure \ref{fig:2ad-1}(a).

\begin{figure}[htbp]
\centering \epsfig{file=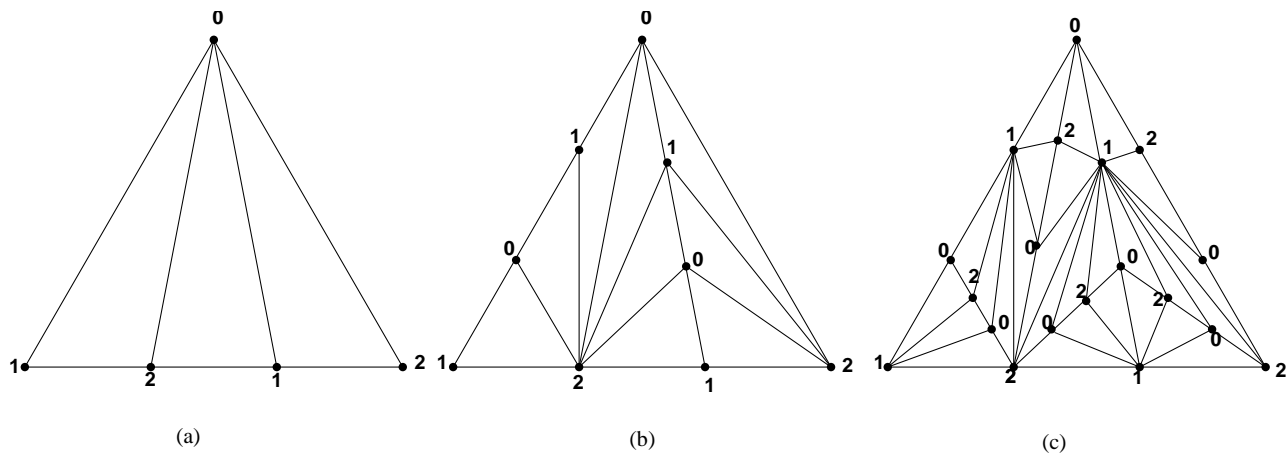, width=17cm} \caption{A $3$ processors, $3$ rounds
subdivision example.  The order edges on which messages where exchanged is $(1, 2), (0, 1), (0, 2)$.  (a), (b) and (c)
are the corresponding first, second and third split operations.} \label{fig:2ad-1}
\end{figure}

For example, consider this process of graph evolution by drawing $G_{r+1}$ in the plane
(Figure \ref{fig:2ad-1}(a)). Initially the graph is the triangle of the initial states.
Assume the first round of the TP-pairs schedule is sending messages on edge $(1, 2)$.
To construct the
graph corresponding to this round take the $(p_1,p2)$ edge and replace it by a path by
planting two middle nodes on the edge, denoting the new local states by the original
$(p_1, p_2)$ to get an alternating path of $p_1, p_2, p_1, p_2$.  We now connect the
middle nodes with node $p_0$, and we got $G_1$.

Inductively, we embed the initial triangle in the plane, it is 3-colored by the 3 ids.
In any inductive step we embed node on an edge and connect them to the third 
node in the triangles the edge is in. Obviously we have an embedding and the corresponding triangilated
triangle is 3-colored, and the original edges of the triangle we started with are now a face which is 
2-colored by the colors $p_i,p_j$ defining the original egde.

\section{Conclusions}

We have taken a step beyond iterated models to consider pure
synchronous message passing models with message adversary. 
Just this simple realization is interesting as it shades a big light
on the inherent logical synchrony of the iterated models. We have then
done the first step in showing the benefit of the new paradigm, by
showing there is an adversary that in a complete network is exactly
RWWF.  We have restricted our notion of models to tasks
invoked by a processor with its id (our abstract ``item'' in the
body of the paper) as an input and returning a subset of the ids of
processors that invoked the task. It is known that all
``reasonable'' set of runs of shared memory can be captured by such
tasks. Obviously any such task that follows from shared memory
(e.g., write your id, scan until at least observed $n-t$ ids, return
the scan = $t$-resiliency) can be captured  by an adversary.  But is
the reverse true? Is every adversary possess exactly the
computational power of some ``procedural'' shared-memory task? We
suspect the answer is yes. Finally, we have introduced a new question (which is
obviously decidable) is to came up with an algorithm that given an
adversary will output its computational power.

In a complete network it is known that Immediate Snapshots is the weakest adversary that is just RWWF,
and TP-dynamic was proved the strongest. Thus the adversaries of RWWF are sandwiched above and below.
It is true for all adversaries, i.e. do all equivalent adversaries constitute a lattice?

Last but not least. Elsewhere, in a companion submission, we have realized 
that our understanding of the notion of solving a task in shared-memory read-write
model, is not crystal clear. Solving tasks in models the processes do not ``fail-stop''
but are rather just late, puts theoretical distributed computing on much firmer grounds.
We have added richness to this domain, by associating it with more ``practical''
models, then just artificial iterations over a bank of shared-meories.

\ignore{
pppppppppppppppppppppppppppppppp

To show that for a task $T$ to be solvable RWWF it needs to color a subdivided-simplex, we enhance
$W$ with the condition that for each $p_i, p_j$, at least one return the other or both. I.e. we take the task
which is the transitive-closure of $W$ to get the task $TC(W)$. We break $TC(W)$ to
$n(n-1)/2$ successive stages tasks where in each one a unique pair of processors $p_i,p_j$ is chosen, and either
$p_i$ returns $p_j$, or $p_j$ return $p_i$ or both while aside from this additional return all processor return their
input to the task,to get a task $TC(W)$-pairs. We show that the task $TC(W)$-pairs is equivalent to $TC(W)$ and
consequently solves just all that is solvable RWWF.

But we notice inductively that in each stage of $TC(W)$-pairs we
take a combination of possible input that constitute a subdivides
simples, and what we do at a stage is take edges in disjoint
simplexes, subdivide this edges and then cone-off these new points
on the edges with the rest of the points of the simplexes in which
the edge resides. Obviously by definition what we did is take a
subdivided-simplex and further subdivide it in apparently obvious
way.

After viewing message-failure in a synchronous network as a task, we
notice that our task corresponds to failure pattern in a complete
network. What if we are in a network in which $p_i$ is not a
neighbor of $p_j$. Thus, apriori we know that viewed as a task $p_i$
will not return $p_j$ and vise-versa. Thus failures in network which
are not complete corresponds to tasks, which are strict subset of
$W$.  For instance, in a line network being a subset of $W$ implies
that all intermediate nodes in the line can be essentially viewed a
repeaters, thus if we do read-write a line network we can do 2-set
consensus where each processor outputs one of the ids of the two
boundary processors.

What about a ring network? What about a complete network missing a
single edge? We resolve this questions for few networks but leave as
an open problem the characterization of graph that like complete
network can solve a task that will be equivalent to RWWF.

Related work. Iterated, ABD etc.

paper is organized...exactly as in teh intro!

We show which adversary is ($n-k$)-set-consensus for $k=1, \ldots n-1$ etc.

Path, ring.

ppppppppppppppppppppppppppppppppp

***************************************************************

In this paper we prove the existence of a hierarchy among the adversaries, according to their power to solve distributed
tasks.  With the TP adversary a dynamic network can exactly solve
any task which is wait-free solvable in a read/write shared memory (RWWF).  Any adversary stronger than TP
prohibits the network from solving any RWWF task, and any weakening of TP enables the network to solve tasks
not RWWF solvable, e.g., $(n$-$1)$-set-consensus.  The TP$\bigcup 2$-CC adversary
(both a traversal path and a size $2$ strongly connected component must be subgraphs of each RCG) is the next level of the hierarchy, with
it $(n$-$1)$-set-consensus can be solved, but not $(n$-$2)$-set-consensus, and in general with the TP$\bigcup k$-CC adversary
we can solve $(n$-$k$-1$)$-set-consensus but not $(n$-$k$-$2)$-set-consensus.  With the $SC$ adversary consensus is solved.

The idea behind the emulation of a RWWF run that solves task $T$ in an TP
dynamic network is similar to the emulation performed in an iterated immediate snapshot \cite{BG97}.
Each node starts with its input in $T$ and tries to do its first write.
After a finite number of rounds ( $\sim n$)
there must exists at least one node $s_1$, such that its value has reached all other nodes in the network,
and node $s_1$ ``knows" that its value has reached all others.  Thus node $s_1$ has completed a write operation that has
been recorded in all other nodes, and we say that one iteration has been completed.
Notice, after completing the simulation of a write a node simulates a read operation using all the values that have
been recorded in that node.
After a finite number of iterations, there must be at least one node $s'_1$
such that $s'_1$ has completed a finite number of write and read operations.
Since task $T$ is a wait-free read-write task, after
a number of such iterations any node that has completed enough write and read
iterations outputs and has completed its run in solving $T$.
Moreover, all the other nodes ``know" that $s'_1$ has
completed its execution in the computation of $T$.  Thus from this point on
the other nodes do not expect to hear from $s'_1$ and node $s'_1$ becomes a relay,
it forwards in each round to all its neighbors the union of all the messages it has received in the previous round.
Therefore, once such a subset $S_1$ of nodes has completed their computation in $T$,
there must exists at least one node $s_2 \in V\setminus S_1$ such that
its value reaches all the nodes in $V\setminus S_1$, and $s_2$ knows that.  Thus after enough additional iterations a subset
$S_2$ completes its computation of $T$ and outputs.  This continues until all the nodes complete the
computation of $T$ and output.

Our simulation stands in contrast to \cite{ABD}
where Attiya, Bar-Noy and Dolev show that message passing with majority of
processors alive can wait-free simulate read and write operations.
In contrast, solving a task, requires only non-blocking simulation.
To achieve this more powerful simulation, ABD pay by equating
message-passing with shared-memory only for powerful models in which
only minority of the processor crash.

How a system in which some processors fail is simulated in our model.  The failing processors are at the sink of the traversal path.

In \cite{ABD} Attiya Bar-Noy and Dolev implemented an atomic register, i.e., the read and write operation in
a network in which

Both iterated immediate snapshots and the dynamic network with TP input id output sets of ids.
Same sets.
All these tasks, input id output set of ids.  Iterate.

Thus in a way we obtain an implementation of asynchronous wait-free
shared-memory over synchronous message passing, such that the
synchronous message passing is not strong enough to solve anything
which is not solvable in asynchronous wait-free shared-memory.

Intuitively we ``gain'' as we do not implement ``reads'' and
``writes'' in a wait-free manner like \cite{ABD} does, but rather
only in a non-blocking manner. Indeed, to solve a task we do not
need wait-free read and write constructs, and consequently to notion
of solving a task wait-free should have been called solving a task
with wait-statement-free code.

Aside from raising a general intriguing question, and partially
answering it, the paper shows the value of identifying $W$ by
showing how simple it is to see that anything solvable by the
iteration of $W$, namely $RWWFS$, has to have the property that it
can be simplicity mapped to a subdivided-simplex. Perviously, the
proof used the fact that the Immediate-Snapshots task is itself a
subdivided-simplex but this fact is not that simple to prove. In
fact, ours, here, is the simplest proof of this fact.

9999999999999999999

We start by presenting a synchronous message passing (MP) model that can solve any task that is wait-free solvable in an asynchronous
shared memory and nothing more.  That is, it captures the "power" of wait-free shared memory tasks by a message passing model.

next we show that these models can all be cast in the same framework of iterated tasks.
Iterated Tasks:
In a task each processor starts with its id, and returns a subset of the $n$ ids.  The set of allowed subsets define the task.  E.g., snapshot task,
immediate snapshot task, or our Traversal adversary can be also, consensus, ....
A task is one shot.  In an iterated task T a processor goes through a sequence of copies of T, in each it inputs its output from the previous and stops
after going through some $K$ copies, with an output from the computation.  For example, Iterated immediate snapshot.

Iterated Traversal, the synchronous multi round Traversal is another iterated model.  We claim it is the weakest iterated task model that
solved any wait-free sm task, in the sense that set of subset specification is the least constrained.....   Do we need the FAT here?

This obvious question has not been asked since mistakenly many presume that it was
answered by Attiya Bar-Noy and Dolev in the celebrated ABD paper \cite{ABD}.
Yet, this is not the case. ABD consider a message-passing model where processors fail.
Iterated models like the above have no notion of processor failure.
They have a notion of limited communication.
Of course, one can simulate a processor failure by communication failure but the resulting model
is very strong. It is equivalent to $t,~t>n/2$ resilient model.
Can there be a communication failure pattern which allows for solving precisely all and just all the
tasks which are solvable read=write wait free?

Networks with communication failure were investigated in....Blabla bla....

We ask a computability question.

We show that the adversary whose power is limited to removing just a single but not two messages
on each link in a round (i.e. can remove order $n^2$ messages in a round)is the unique adversary
which is equivalent to the RWWF model.

Any adversary which allows for read-write and is constraint from touching one link in a round, even different
from round to round, results in a model which is stronger than RWWF.
Thus all adversaries which are not stronger than RWWF have to allow the removal of a message on each link in each round.

On the other hand we show that any adversary which in a round is allowed to remove a single message on any link but yet allowed
to choose any one link on which it is allowed to fail the two messages, results in an adversary  which cannot implement read write,
completing the proof of the uniqueness of the RWWF adversary.

Surprisingly, when considering this adversary it is painfully easy (compared the past methods \cite{}) to observe
that RWWF is equivalent to subdivided simplex.

We also consider any communication network eg trees rather then a complete graph. The lack of a link can be modeled of course
by an adversary who in each round removes 2 messages on the missing link.

Minimal info adversary. How many different patterns. The more patterns the stronger the adversary. Among all r-w adversaries
the complere network is the min info.

\section{At Least One Message in Each Direction, ALOMED}

Main Theorem:
In any graph a necessary and sufficient condition to solve any RWWF task is the existence of a Traversal Path in each round.

Proof:

No Traversal (Traversal necessary)$==>$ Can disconnect v and u.
There is v and u s.t., no communication between v and u whatsoever.
Neither of them can complete a write.

Traversal Sufficient $==>$ Implementation.

Node $v$ is king if for $n$ rounds:
\begin{enumerate}
\item
There is nothing new, $v$ does not received any new information and
\item
Every body $v$ has heard from, also reported that it has heard from $v$.
\end{enumerate}

Consider node $v$ satisfying 1 and 2 above.
Let $k$ be the number of processors that $v$ heard from, and $S_k$ be the set
of these processes, including $v$.

Claim:
There are at most $k$ rounds in the last $n$ rounds in which some subset $S'_k\subseteq S_k$, s.t.,
$v\in S'_k$ is $not$ the source of the Traversal path, i.e., in which there is a link from a node not in
$S_k$ into a node in $S'_k$.

Proof of claim:

In each such round information about a node $u$, $u\not\in S_k$
propagates to one more node in $S'_k$, thus $v$ should have a heard
about a new node, a contradiction. qed-claim.

Now, in each of the $n-k$ rounds in which $S_k$ is the source of the Traversal path one more node not in $S_k$ learns
about the information from $v$.  QED

\ignore{
I have no other predecessors than these $S_k$, in the last $k$ rounds.  So my input has been forwarded to one new
process in each of the $n-k$ rounds of the last $n$ rounds.  Otherwise for $k$ rounds nodes that did not hear from me were
predecessors of parts of $S_k$ including myself, and I should have heard from a new node.

In the first $n-k$ rounds since I heard nothing new,
If for $k$ rounds I have heard nothing new, then none of my predecessors in $S_k$ in any of the rounds before $t$
have heard anything new. those in $S_k$

If in $n$ rounds either $p+i$ hears from $p_j$ or vice versa or
both, nut they cannot miss each other: Let the set $Heardp_i$ and
$Heardp_j$ be the sets of processors that have heard from $pi$ and
$p_j$, respectively, then in the next step one of the sets grows by
at least one. Since there is a Traversal path w.l.o.g there is a
directed edge in path from a node of one set to a node not in that
set. This follows since we assume $p_i$ is not in $Heardp_j$ and
vice versa. Thus w.l.o.g $p_i$ is in the path before $p_j$ thus we
start with $p_i$  in $Heardp_i$ and end in $p_j$ that is not in
$Heardp_i$ so there is the first on the path with such a switch and
this one will be added to $Heardp_i$.

}

I.e., there is no body that I heard from and that did not hear from me.

Then I have completed a write. Otherwise,

\subsection{One Link with Messages in Both Directions, OLMBD}
ALOMED is minimal.

If in (OLMBD) each round there is at least one link in which messages are successfully transmitted in both directions, then 2-set consensus.

Theorem:
A dynamic network with a TP$\bigcup k$-CC adversary can solve
$(n$-$k$-1$)$-set-consensus but not $(n$-$k$-$2)$-set-consensus.

\section{The topological structure of the TP dynamic network}

}

\ignore{
\section{The $2$-$AD$ Model}
\label{section:model}

The innovation of the paper is in introducing an elementary simple model of computation
called the $2$-$AD$ model$^1$.  The model is synchronous and preserves the invariant that
at the end of every round the complex of the local states at the end of the round is a
subdivided simplex.  Moreover we show that the $2$-$AD$ model is universal for tasks,
i.e., implements read-write in a non-blocking manner.

\paragraph{The $2$-$AD$ Model}
The $2$-$AD$ model is a synchronous message passing system with $n$ processors
$p_0,...,p_{n-1}$, with a special communication pattern.    Each round is a priori
associated with one unordered pair of processors $(p_i, p_j)$ in a fair manner; in a
infinite schedule each pair has infinitely many rounds associated with it.  In a $(p_i,
p_j)$ round processors $p_i$ and $p_j$ send each other messages containing inductively
the full-information of their local states (i.e., their entire history).  In each such
round the adversary can drop one of the two messages, allowing only one message to be
received, or drops no message.  The infinite schedule is a priori known by the
processors, i.e., at which rounds to send a message to which other processor, from which
it also expects to possibly receive a message in that round.  For simplicity we may
consider the schedule to be a round-robin over all the unordered pairs of processors.

\subsection{The $2$-$AD$ Complex at an End of a Round}

Keeping the exposition simple, we avoid unnecessary formalism and notation by restricting
this section to $n=3$.  It conveys the idea.  Unlike the difficulty eluded to in going
from $n=2$ to $n=3$, the $n=3$ case generalizes to higher $n$'s in a straightforward way.
For completeness we repeat the construction given here, but for arbitrary $n$ in Appendix
\ref{app:subdivided}.

Consider all the possible local states of the 3 processors $p_0, p_1, p_2$ after round
$r$ and make a graph $G_r$ out of it.  The nodes of $G_r$ are the pairs of processor-id
and its possible local state at the end of round $r$.  Two nodes are connected by an edge
between them if there exists an execution $E$, that is an instantiation of the adversary
in rounds $1$ to $r$, such that the corresponding processors are in the corresponding
states.  How do we derive the graph $G_{r+1}$.

Let round $r+1$ be a $(p_i, p_j)$ round.  How does the different behaviors of the
adversary in round $r+1$ change the graph?  It is easy to see that in place of each
$(p_i, p_j)$ edge we get now a path of 3 edges $(p_i, p_j'),(p_j',p_i'),(p_i',p_j)$. The
first node in the new path is a $p_i$ node with a state in which $p_i$ appends to its
local state that it did not receive a message from $p_j$ in round $r+1$. Analogously, the
node $p_j$ at the other end of the path. The nodes in the middle, $p_j'$ and $p_i'$, are
nodes in which the corresponding processor appends to its state the content of the
message it received in round $r+1$.  The nodes corresponding to the third processor $p_k$
do not change from $G_r$ to $G_{r+1}$, and each node of type $p_k$ that was connected to
an edge $(p_i, p_j)$ at the end of round $r$ is now connected to the four nodes of the
path that replaces the $(p_i, p_j)$ edge of $G_r$.  See Figure \ref{fig:2ad-1}(a).

\begin{figure}[htbp]
\centering \epsfig{file=2ad123.eps, width=17cm} \caption{A $3$ processors, $3$ rounds
subdivision example.  The rendezvous order is $(1, 2), (0, 1), (0, 2)$.  (a), (b) and (c)
are the corresponding first, second and third split operations.} \label{fig:2ad-1}
\end{figure}

For example, consider this process of graph evolution by drawing $G_{r+1}$ in the plane
(Figure \ref{fig:2ad-1}(a)). Initially the graph is the triangle of the initial states.
Assume the first round of the $2$-$AD$ schedule is a $(1, 2)$ round.  To construct the
graph corresponding to this round take the $(p_1,p2)$ edge and replace it by a path by
plating two middle nodes on the edge, denoting the new local states by the original
$(p_1, p_2)$ to get an alternating path of $p_1, p_2, p_1, p_2$. We now connect the
middle nodes by straight lines to node $p_0$, and we got $G_1$!

The graph $G_1$ with the embedding we used is called a triangulated triangle. As we
continue the process inductively what we get is also a triangulated triangle. Thus we can
embed $G_{r+1}$ so that it is a triangulated triangle.

Moreover, inductively, this triangulated triangle is {\it chromatic}. That is the nodes
of each triangle correspond to the 3 processors, and last but not least, each side of the
initial triangle, say the one that was originally $(p_i, p_j)$ is a alternating $p_i,
p_j$ path starting with $p_i$ and ending with $p_j$.

\subsection{ The $2$-$AD$ as a Publication System}

How does information propagate in the $2$-$AD$ system? This is the question we attend to
next, and which is the basis to implementing reading and writing in the model.

\begin{definition}{\bf The king property of the $2$-$AD$ model:} We say that
in an infinite execution $E$,
that is a schedule with an instantiated adversary, processor $p_k$ has the {\em king($S$)}
property with respect to a subset $S$ of processors, if as the schedule progresses $p_k$
knows that all processors in $S$ learn about its local state changes infinitely often.
\end{definition}

\begin{lemma}In any subset of processors $S$ and in any execution $E$ of the $2$-$AD$ model,
there is at least one {\em king($S$)} processor.\label{lem:king}\end{lemma}

\begin{proof}
Draw a graph ${\cal G}_S$ whose nodes are processors in $S$  and draw a directed $(p_i,
p_j)$ edge if messages from $p_i$ to $p_j$ are successfully received by $p_j$ infinitely
often.  Obviously ${\cal G}_S$ contains a tournament. By \cite{lan53} there is a king
processor $p_k$ such that all other nodes in $S$ are at distance at most $2$ from $p_k$.
Thus all processors in $S$ learn about new state changes of $p_k$ infinitely often.  It
remains to show that $p_k$ knows that any other node $p_i \in S$ has learned about its
state changes.  This follows since for any $p_i \in S$ in the $(p_i, p_k)$ round that
comes after $p_i$ has learned about the new state of $p_k$, either $p_k$ receives a
message from $p_i$ and we are done, or $p_k$ does not receive a message from $p_i$, in
which case we are also done since $p_k$ knows its message to $p_i$ has been received in
this round.
\end{proof}

Now assume that each processor has a sequence of $j$ news items it has to {\it publish}
one after the other.  Before a processor completes publishing all its items it is said to
be {\it active}, afterwards it is said to be {\it terminated}.  An item is published if
all {\em active} processors have received it. Alas, items have to be published in order,
one after the previous has been published.  We now show that the $2$-$AD$ model
implements the publication of all items of all the processors.

Start with $S=N$, the set of all processors.  Since there is a king($N$) at least one
processor terminates publishing.  Inductively, let $L$ be the set of {\em active}
processors then by Lemma \ref{lem:king}, king($L$) terminates, proving the assertion.

All the above could have been argued using Strongly-Connected-Components as in
\cite{BG97}, but we find that in our setting the king argument is simpler.

\section{$2$-$AD$  Non-Blocking Implements Read/Write}
\label{sec:nbrw}

From \cite{lamport86} it follows that SWMR is equivalent to each directed pair of
processors $(p_i,p_j)$ having a dedicated SWSR register $R_{i,j}$ exclusively written by
$p_i$ and exclusively read by $p_j$.  We now show that $2$-$AD$ implements asynchronous
SWSR shared-memory in a non-blocking manner.  We consider the full information SWSR
shared memory model.  The initial state of each processor contains its input.  In each
step of processor $p_i$, it writes its state to all of the $n-1$ registers $R_{i,*}$ in
an arbitrary order, then reads, in an arbitrary order, all the incoming registers and
computes and updates its new state.  The new state is the full history of the processor,
i.e., concatenating the values it read from all to its previous state.

Consider a protocol in which each processor has to repeat $j$ times writing to all and
reading from all. Distributed problems, called {\it tasks} that are solvable, to be
described shortly, are solvable by this canonical protocol, using some $j$ that depends
on the task, rather than the interleaving execution of the protocol \cite{BG93}.
Moreover, once a processor finished writing and reading $j$ times it is in a final state,
and the task is solvable even without other processor writing to the processor in its
final state.

The analogy now to the publish system is inescapable, and we conclude that a publish
accomplishes a write, while reading the states of processors from processors' local state
accomplishes the read.

\section{Read-Write Wait-Free Solvable Task Colors a Subdivided-Simplex}
\label{sec:tasksimplex}

In this section we put the pieces together showing that any task that is wait-free
solvable in a read/write system colors a subdivision of its protocol complex with its
output values.

A task $T$ on $3$ processors $p_0, p_1, p_2$ is a map  $\Delta$ from singletons $\{p_i\}$
to pairs $(p_i,O1_{i,1}) ,..., (p_i,O1_{i,k})$ where $O1_{i,l},~l=1,...,k$ are output
values. Similarly pairs $\{p_i,p_j\}$ are mapped to $((p_i,O2_{i,1}),(p_j,O2_{j,1}))$
$,..., ((p_i,O2_{i,m}),(p_j,O2_{j,m}))$ for some $m$, and finally commensurately for the
triple $\{p_0,p_1,p_2\}$. We say that $T$ maps participating sets to output tuples.

We say that a read-write protocol wait-free solves $T$ if in any run any processor that
takes enough steps enters a final state in which it outputs. In an infinite run in which
only the set $P$ writes and reads, the combination of the outputs together with the
associated processors that output them, have to be such that $\Delta$ maps $P$ to the
corresponding output tuple.

If the task $T$ is solvable by canonical SWSR system in $j$ rounds it is easy to see that
there exists $f(j)$ such that after $f(j)$ synchronous rounds the $2$-$AD$ system
emulated each processor writing and reading $j$ times. Thus, after $f(j)$ synchronous
rounds all the states of $G_f(j)$ can be mapped to outputs of $T$.  Moreover, the sides
of the triangulated triangle $G_f(j)$ emulate executions of the canonical SWSR system in
which only processors that define the sides took steps.  Since processors cannot predict
the future they need to output values from $\Delta$ associated with the face.  That is
what is meant by the output of $T$ appropriately colors a triangulated triangle.

\section{Conclusions}
\label{sec:conc}

We presented an outline for a simple proof of the necessary conditions for a task $T$ to
be wait-free solvable.  The conditions have been proved time and again so there is no
question about the correctness of the result. The challenge taken by this paper is to
make the Herlihy-Shavit condition easily accessible and understood both in the conceptual
level and at the lowest level of details.  \ignore{The question is what will it take to
informally convince a beginner graduate student who takes a distributed algorithm class
that she understands where this condition comes from and when push come to shove can
reproduce the proof, with the dotted i's and crossed t's. I.e. did she get the idea.}
This paper stands for the challenge. \ignore{We present a paper that we think will
accomplish this goal midway in a course, without too much sweat spent.} In fact filling
in the details of this paper can be a fun pass time for a student part way into
introducing herself to the area.
}

\appendix

\section{The TP-pairs subdivided complex} \label{app:subdivided}

For the sake of completeness we follow here the arguments given in Section
\ref{section:subdivided} and show somewhat more formally that the TP-pairs model when executed
for $k=f(j)$ rounds implies a subdivision of the input complex.  The argument here is for
arbitrary $n$ and is similar to that in Section \ref{section:subdivided}.  Each elementary simplex
in the subdivision corresponds to a particular $k$ rounds execution of TP-pairs, i.e., a
particular instantiation of the adversary.  The number of iterations, $k=f(j)$ is taken
as the number of rounds required by the TP-pairs in the worst case to emulate a complete
wait-free execution of a task under consideration (see Section \ref{sec:king}).

W.l.o.g, instead of subdividing the input complex we create a subdivision ${\cal T}$ of
an $n$-vertex simplex ${\cal P}$, corresponding to an input less initial configuration
simplex.   Every vertex $v$ in $\sigma$ is associated with a processor's name $\chi(v)$.
Distinct vertices at the same simplex of ${\cal T}$ are associated with distinct
processors, so that $\chi$ is a proper coloring of the graph which is ${\cal T}$'s
one-dimensional skeleton.  Clearly, initially $\chi$ is a proper vertex coloring of $\cal
P$. In addition, the construction of ${\cal T}$ is such that $\chi$ has the {\em Sperner
property} i.e., if the vertex $y \in V({\cal T})$ is on ${\cal P}$'s boundary and if
$\sigma$ is the lowest-dimensional face of ${\cal P}$ that contains $y$, then $\chi(y) =
\chi(x)$ for some vertex $x$ of $\sigma$.

\begin{lemma}
\label{lem:tri} Given the TP-pairs model on $n$ processors, and $k$ the number of rounds
executed by the model, then there is a subdivision ${\cal T}$ of the $n$-vertex simplex
${\cal P}$ with proper coloring $\chi$ of $V({\cal T})$, the vertices of ${\cal T}$. Such
that each simplex in the subdivision represents the final states of the $n$ processors at
the end of the execution.  The coloring $\chi: V({\cal T}) \rightarrow \{0,\ldots,n-1\}$
satisfies Sperner's condition.

\end{lemma}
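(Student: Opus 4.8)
The plan is to prove Lemma~\ref{lem:tri} by induction on the number of rounds $k$, mirroring the $n=3$ construction of Section~\ref{section:subdivided} but carried out for arbitrary $n$. For the base case $k=0$ take $\mathcal{T}$ to be $\mathcal{P}$ itself, with $\chi$ assigning the $i$-th vertex to processor $p_i$; this is trivially a proper coloring of the $1$-skeleton and trivially satisfies Sperner's condition, and the unique $(n-1)$-simplex records the initial local states. I would carry along, in addition to the statement of the lemma, the stronger inductive invariant that every elementary $(n-1)$-simplex of $\mathcal{T}_r$ carries exactly one vertex of each color $0,\dots,n-1$ (equivalently one final-state vertex per processor), and that the elementary $(n-1)$-simplices are in bijection with the $r$-round instantiations of the TP-pairs adversary.

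For the inductive step, suppose round $r+1$ is the TP-pairs round dedicated to the unordered pair $\{p_i,p_j\}$. The adversary has exactly three choices: deliver both messages, deliver only $p_i\to p_j$, or deliver only $p_j\to p_i$. I would describe the transformation $\mathcal{T}_r\mapsto\mathcal{T}_{r+1}$ locally: every elementary edge of $\mathcal{T}_r$ whose endpoints are colored $i$ and $j$ is replaced by the length-three path $p_i - p_j' - p_i' - p_j$, where the two endpoints append ``received nothing from the partner in round $r+1$'' to their states, $p_j'$ appends the message from $p_i$, and $p_i'$ appends the message from $p_j$; every vertex colored by some $k\notin\{i,j\}$ is kept, merely appending ``received nothing in round $r+1$''; and every elementary $(n-1)$-simplex $\sigma$ containing an $(i,j)$-edge is subdivided into three $(n-1)$-simplices by coning the two new interior vertices $p_j',p_i'$ over the opposite $(n-2)$-face $\sigma\setminus\{p_i,p_j\}$ (Figure~\ref{fig:2ad-1} shows this for $n=3$), while simplices with no $(i,j)$-edge are left combinatorially unchanged. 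Geometrically this is nothing but starring an edge at two successive interior points and re-coning, so it is a bona fide subdivision of each affected simplex; and because all simplices incident to a given $(i,j)$-edge share the very same replacement path, the pieces glue consistently into a subdivision $\mathcal{T}_{r+1}$ of $\mathcal{P}$.

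Then I would verify the three properties. Properness: in the new path consecutive vertices alternate the colors $i,j$, and each new interior vertex (color $i$ or $j$) is joined only to vertices whose colors avoid $i$ (resp. avoid $j$); inside each of the three new $(n-1)$-simplices the color set is $(\{0,\dots,n-1\}\setminus\{i,j\})\cup\{i,j\}=\{0,\dots,n-1\}$, so the one-vertex-per-color invariant and properness of the $1$-skeleton persist. Sperner: the only new vertices sit on $(i,j)$-edges, which lie in the face of $\mathcal{P}$ spanned by $\{i,j\}$ — hence in every face of $\mathcal{P}$ spanned by a set containing $\{i,j\}$ — and they receive colors in $\{i,j\}$; together with the inductive hypothesis for all other boundary vertices this gives the Sperner property for $\mathcal{T}_{r+1}$. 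Finally the bijection between elementary simplices and adversary instantiations is preserved, since the three new simplices produced at each split are exactly the three round-$(r+1)$ continuations of the execution coded by $\sigma$, and in each the $n$ vertices are precisely the updated final states of the $n$ processors. Taking $k=f(j)$, the number of TP-pairs rounds needed to emulate a complete wait-free execution of the task under consideration (Section~\ref{sec:king}), yields the claimed subdivision $\mathcal{T}$ with its coloring $\chi$.

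The main obstacle I anticipate is not the coloring bookkeeping, which is routine, but making fully rigorous the claim that the local ``replace an $(i,j)$-edge by a $3$-path and cone'' operation defines a global geometric subdivision of $\mathcal{P}$: one must fix a consistent embedding (e.g. place $p_j'$ then $p_i'$ at two interior points of the original segment, in that order), verify that this embedding is independent of which containing simplex one views it through, that adjacent simplices not sharing the subdivided edge stay compatible, and that no degenerate or overlapping simplices arise in dimension $\geq 3$. Doing this cleanly for arbitrary $n$, rather than appealing to the planar picture as in the $n=3$ case, is where the real care is needed; the cleanest route is probably to phrase each round as two successive stellar subdivisions at edges and to invoke the standard fact that a stellar subdivision of a subdivision of $\mathcal{P}$ is again a subdivision of $\mathcal{P}$.
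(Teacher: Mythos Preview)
Your approach is the paper's: build $\mathcal{T}_0,\ldots,\mathcal{T}_k$ by, at round $r+1$ with scheduled pair $(i,j)$, splitting every edge whose endpoints carry colors $i,j$ into a three-edge alternating path and joining the two new vertices to the remaining $n-2$ vertices of each top simplex containing that edge. The paper writes this explicitly as an ``$xy$-split'', $\sigma={\rm conv}(\{x,y\}\dot\cup S)\mapsto\bigcup_{\ell=0}^{2}{\rm conv}(\{z_\ell,z_{\ell+1}\}\dot\cup S)$; your suggestion to view the step as two successive stellar subdivisions is a perfectly good alternative way to certify that the result is a genuine subdivision. (By your own invariant every top simplex has one vertex of each color, so the ``simplices with no $(i,j)$-edge'' case is vacuous.)

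There is one real slip, in the Sperner verification. You assert that the $(i,j)$-colored edges ``lie in the face of $\mathcal{P}$ spanned by $\{i,j\}$''. That is false once $n\geq 4$. After the $(1,2)$-round there is a $1$-colored vertex $p_1'$ in the relative interior of the original $(1,2)$-edge; the edge from $p_0$ to $p_1'$ is a $(0,1)$-colored edge lying in the $2$-dimensional face $\{0,1,2\}$ of $\mathcal{P}$, not in the $1$-dimensional face $\{0,1\}$. When that edge is split in the $(0,1)$-round, the two new vertices are boundary vertices of $\mathcal{P}$ whose minimal containing face has vertex set $\{0,1,2\}$. Your sentence does not cover them. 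The correct argument, which is what the paper gives, goes through carriers: if $x$ has carrier $F_x$ and $y$ has carrier $F_y$ in $\mathcal{P}$, then any interior point of the segment $xy$ has carrier the face with vertex set $V(F_x)\cup V(F_y)$; the inductive Sperner hypothesis gives $\chi(x)\in V(F_x)$ and $\chi(y)\in V(F_y)$, so each new vertex --- colored $\chi(x)$ or $\chi(y)$ --- receives a color already present among the vertices of its own carrier.
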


\begin{proof}  We go through a sequence of subdivision refinements, ${\cal T}_i$, $i = 0,\dots, k$,
where, ${\cal P} = {\cal T}_0$,  ${\cal T}_{i+1}$ is a subdivision refinement of ${\cal
T}_i$ that corresponds to all possible behaviors in the $i'th$ round of the TP-pairs
adversary, and ${\cal T}_k = {\cal T}$.  Thus we go repeatedly over the sequence of
rendezvouses in the order in which the TP-pairs scheduler goes, for $k$ steps. In each
step we consider the three possible behaviors of the adversary in the corresponding
rendezvous.   The resulting ${\cal T}$ is finite, with $3^k$ simplices.

${\cal T}_0 = {\cal P}$ is properly colored by $\chi: V({\cal P}) \rightarrow
\{0,\ldots,n-1\}$, i.e., each vertex is associated with the processor id which is its
color, and with its initial state.  Every refinement step consists of several $xy$-{\em
split} operations, that split each simplex of ${\cal T}_i$ into three simplices in ${\cal
T}_{i+1}$.  In an $xy$-{\em split}, where $x, y$ is a pair of adjacent vertices of ${\cal
T}$ we partition the edge $xy$ into three segments $x=z_0,z_1,z_2,z_3=y$ in this order.
All old vertices maintain their $\chi$ values, while $\chi(z_1):=\chi(y)$ and
$\chi(z_2):=\chi(x)$. Furthermore, every simplex $\sigma$ of ${\cal T}$ with $x,y \in
V(\sigma)$, say $\sigma={\rm conv}(\{x,y\}\dot\cup S)$ (i.e., $S \cup \{x,y\}$ is the set
of vertices of $\sigma$) is split accordingly to three simplices $\sigma_0 \cup \sigma_1
\cup \sigma_2$ where $\sigma_i = {\rm conv}(\{z_i,z_{i+1}\}\dot\cup S)$.  We note,
following Section \ref{sec:model} that an $xy$-{\em split} corresponds to a
rendezvous between processors $\chi(x)$ and $\chi(y)$. Vertex $x$ corresponds to the
state of processor $\chi(x)$ in ${\cal T}_i$  concatenated with it sending the message in
round $i+1$ rendezvous but not receiving any message, vertex $z_1$ corresponds to
processor $\chi(y)$ in ${\cal T}_i$ concatenated with it sending its state (message) in
round $i+1$ rendezvous and receiving the message from processor $\chi(x)$, and so forth.
To construct ${\cal T}_{i+1}$ from ${\cal T}_i$ we apply the $xy$-{\em split} operation
to all pairs of vertices, $x, y$ in ${\cal T}_i$, such that $(\chi(x), \chi(y))$ is the
unordered pair of processors associated with round $i+1$ in the TP-pairs schedule. See
Figure \ref{fig:2ad-1} for an example.

It remains to show that the coloring $\chi: V({\cal T}) \rightarrow \{0,\ldots,n-1\}$
satisfies Sperner's condition.  Clearly ${\cal P}$ satisfies the condition.   Consider by
contradiction the first time that the condition fails, in an $xy$-split operation $o$ in
which the edge $x_o$--$y_o$ is split, creating the two new vertices $z_{o1}$, $z_{o2}$,
with colors $\chi(x_o)$ and $\chi(y_o)$.  The only way the condition can fail is if
either of the $z$'s is not in the union of the carriers of $x_o$ and $y_o$.  But, by
simple algebra, the carrier of any point along the $x_o$--$y_o$ edge (line) belongs to
the union of the carriers of $x_o$ and $y_o$, thus $\chi(z_{o1})$ and $\chi(z_{o2})$ are
colors of vertices in their carrier, concluding the proof. \end{proof}

\end{document}